\documentclass[conference]{IEEEtran}

\usepackage[utf8]{inputenc} 
\usepackage[T1]{fontenc}
\usepackage{url}
\usepackage{ifthen}
\usepackage{cite}
\usepackage{amssymb,amsfonts}
\usepackage{algorithmic}
\usepackage{graphicx}
\usepackage{textcomp}
\usepackage{comment}
\usepackage{xcolor}
\usepackage{bbm}
\usepackage{bm}
\usepackage{balance}
\usepackage[cmex10]{amsmath}

\def\BibTeX{{\rm B\kern-.05em{\sc i\kern-.025em b}\kern-.08em
    T\kern-.1667em\lower.7ex\hbox{E}\kern-.125emX}}

\DeclareMathAlphabet\mathbfcal{OMS}{cmsy}{b}{n}

\newtheorem{thm}{Theorem}
\newtheorem{lem}{Lemma}

\begin{document}

\title{Space Upper Bounds for $\alpha$-Perfect Hashing}
\author{
  \IEEEauthorblockN{Ryan Song and Emre Telatar}
  \IEEEauthorblockA{School of Computer and Communication Sciences \\ 
        {\'E}cole Polytechnique F{\'e}d{\'e}rale de Lausanne (EPFL), Switzerland \\
        Emails: ryan.song@epfl.ch, emre.telatar@epfl.ch}
}
\maketitle

\begin{abstract}
     In the problem of minimal perfect hashing, we are given a size $k$ subset $\mathcal{A}$ of a universe of keys $[n] = \{1,2, \cdots, n\}$, for which we wish to construct a hash function $h: [n] \to [k]$ such that $h(\cdot)$ maps $\mathcal{A}$ to $[k]$ with no collisions, i.e., the restriction of $h(\cdot)$ to $\mathcal{A}$ is injective. In this paper, we extend the study of minimal perfect hashing to the approximate setting. For an $\alpha \in [0, 1]$, we say that a randomized hashing scheme is $\alpha$-perfect if for any input $\mathcal{A}$ of size $k$, it outputs a hash function which exhibits at most $(1-\alpha)k$ collisions on $\mathcal{A}$ in expectation. One important performance consideration for any hashing scheme is the space required to store the hash functions. For minimal perfect hashing, it is well known that approximately $k\log(e)$ bits, or $\log(e)$ bits per key, is required to store the hash function. In this paper, we propose schemes for constructing minimal $\alpha$-perfect hash functions and analyze their space requirements. We begin by presenting a simple base-line scheme which randomizes between perfect hashing and zero-bit random hashing. We then present a more sophisticated hashing scheme based on sampling which significantly improves upon the space requirement of the aforementioned strategy for all values of $\alpha$. 
\end{abstract}

\section{Introduction}

In the problem of minimal perfect hashing, we are given a universe of keys $[n] = \{1,\cdots,n\}$ and a subset $\mathcal{A} \subset [n]$ of size $|\mathcal{A}| = k$, and are tasked with constructing a hash function $h: [n] \to [k]$ which exhibits no collisions on $\mathcal{A}$, i.e., $h(\cdot)$ is a bijection when restricted to $\mathcal{A}$.
In this case, we say that $h(\cdot)$ is a minimal perfect hash function for $\mathcal{A}$, where "minimal" refers to the range of $h(\cdot)$ being the same cardinality as $\mathcal{A}$. 

Perfect hashing sees many applications throughout computer science, for example, in the design of hash tables, where we wish to map a large universe of keys to a smaller set of values representing positions on a lookup table  \cite{ph-survey}. When a collision occurs, costly followup measures are typically required which degrade performance. In many cases however, the set of keys that we wish to hash are known a priori. In such cases, we can employ a minimal perfect hashing scheme to construct a hash function a priori for the given set of keys, thus avoiding collisions during future use. 

More recently, perfect hashing has also found applications in communications, particularly in massive random access settings, where a central base-station wishes to provide connectivity to a small random subset of active users among a large pool of potential users. Instead of relying on a traditional contention based scheme such as ALOHA \cite{alohanet}, which requires frequent collision resolution, it has been proposed in \cite{scheduling,scheduling-vs-contention} that the base-station can instead first detect the identities the active users, then construct a perfect hash function which maps each active user to a unique transmission slot. Then, by communicating this perfect hashing function to the users prior to their transmissions, collisions between users can be avoided entirely. These ideas were then generalized to tackle the problem of communicating arbitrary information sources in a massive random access setting \cite{coded-downlink,lossy-coded-downlink}.

When evaluating the performance of a perfect hashing scheme, an important consideration is the number of bits required to store/communicate the hash function. Naively, any hash function $h: [n] \to [k]$ can be stored using $n\log(k)$ bits, which for large $n$, is infeasible. To design a more efficient scheme, we make two observations. Firstly, we only care about the output of the hash function $h(\cdot)$ over $\mathcal{A}$, the $k$ keys of interest. Secondly, although we require that $h(\cdot)$ be a bijection over $\mathcal{A}$, we do not necessarily need to describe a particular bijection among the $k!$ possibilities. Leveraging these two observations, it has been shown that the theoretically optimal amount of space required to store a perfect hash function is approximately $k\log(e)$ bits, or $\log(e)$ bits per key \cite{ph-1,ph-2,ph-3,scheduling}. Although deterministic schemes are theoretically capable of achieving the optimal $\log(e)$ bits per key, most modern practical perfect hashing schemes rely on randomization in the construction process \cite{chd, shockhash}. 

In this paper, we ask the following question. If we relax the collision-free requirement to instead only requiring that an $\alpha$-proportion of keys be hashed without collision in expectation, how much space would be required to store the hash function? This problem, which we call minimal $\alpha$-perfect hashing, extends the minimal perfect hashing to the approximate setting. When $\alpha = 1$, we recover the problem of minimal perfect hashing. The goal of this paper is to develop schemes for minimal $\alpha$-perfect hashing and to analyze their space requirements, which establishes achievability bounds on the space required for $\alpha$-perfect hashing. 


One may be tempted to approach $\alpha$-perfect hashing in the following way. Given a size $k$ subset $\mathcal{A}$, first select an $\alpha k$ sized subset of $\mathcal{A}$, then construct a perfect hash function for this subset, requiring $\alpha k\log(e)$ bits. The problem with this approach is that although the $\alpha k$ keys are guaranteed to be collision-free among themselves, they may end up colliding with the other $(1-\alpha)k$ keys in $\mathcal{A}$. Further, it turns out that one can develop $\alpha$-perfect hashing schemes which require much fewer bits than $\alpha k\log(e)$. This is because by considering a specific $\alpha k$ subset of $\mathcal{A}$, we implicitly store unnecessary information regarding that subset. Although we wish to control the total expected number of collisions, we are indifferent towards which particular keys end up in collisions.

\subsection{Main Results}

This paper proposes and analyzes the space requirements of two different $\alpha$-perfect hashing schemes. The first scheme randomizes between minimal perfect hashing and zero-bit hashing to achieve a space requirement of $\lambda(\alpha)\log(e)$ bits per key in expectation, where
\begin{equation}
    \lambda(\alpha) = \max \left\{ \frac{\alpha - 1/e}{1 - 1/e},0 \right\}.
\end{equation}
This essentially achieves a linear tradeoff between the number of collisions and the amount of space required. 

We then show that this linear relationship is not optimal. Using results from sampling theory \cite{sfrl,harsha}, we develop a sampling-based minimal $\alpha$-perfect hashing scheme which achieves a space requirement of
\begin{equation}
\begin{split}
 \hspace{-0.6em}\lambda(\alpha) \left( \log(e) - \log\left( \frac{1-\frac{\lambda(\alpha)}{2}}{1 -\frac{\lambda(\alpha)}{2} - \left( 1 -\frac{1}{e} \right) (1-\lambda(\alpha)) }\right) \right)
\end{split}
\end{equation}
bits per key in expectation. This improves upon the aforementioned scheme for all $\alpha$.

From Figure \ref{fig:ph-plot}, we can see a surprising tradeoff between $\alpha$ and the bits per key needed to store the hash function. For example, storing a minimal $0.9$-perfect hash function requires roughly $1$ bit per key compared to the $\log(e) \approx 1.44$ bits per key required to store a minimal perfect hash function. Note that when $\alpha = 1$, both hashing schemes achieve the optimal $\log(e)$ bits per key. 

\begin{figure}
    \centering
    \includegraphics[width=0.48\textwidth]{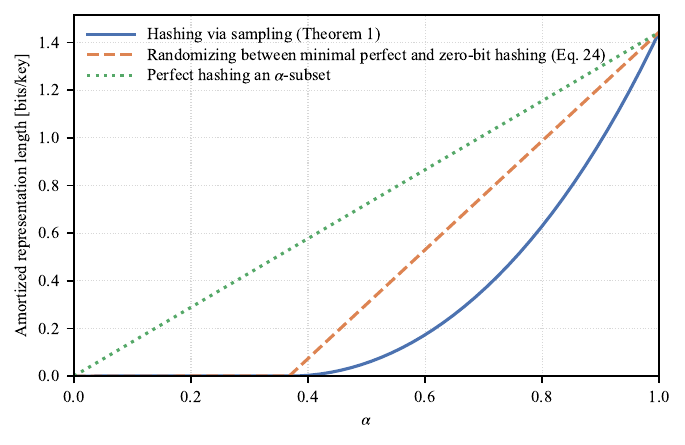}
    \caption{\label{fig:ph-plot}{\color{black} Upper bounds on the optimal amortized representation length $R^*(\alpha)$.}} 
\end{figure}

\subsection{Notation}

Unless stated otherwise, lowercase letters denote scalars, lowercase boldface letters denote vectors, capital letters denote random variables, boldface capital letters denote random vectors, and calligraphic letters, e.g. $\mathcal{S}$, denote sets.
All information measures are expressed in bits, including entropy $H(\cdot)$ and the Kullback–Leibler divergence $D(\cdot\|\cdot)$. We let $\log(\cdot)$ denote the base $2$ logarithm and use the shorthand notations $(x)^+ = \max\{ x,0 \}$, $[n] = \{ 1,\cdots,n \}$, and $\binom{[n]}{k}$ to denote the set of all size $k$ subsets of $[n]$.

\section{Problem Formulation}

This paper considers the problem of minimal $\alpha$-perfect hashing. Let $\mathcal{H}_{n,k} = [k]^n$ denote the set of all functions mapping $[n]$ to $[k]$ and $\alpha \in [0,1]$. Given a size $k$ subset $\mathcal{A} \in \binom{[n]}{k}$, we say that $a \in \mathcal{A}$ collides under $h(\cdot)$ in $\mathcal{A}$ if there exists another $a' \in \mathcal{A}$ such that $h(a) = h(a')$. We define $d: \binom{[n]}{k} \times \mathcal{H}_{n,k} \to \mathbb{R}$ to be the function which computes the proportion of $\mathcal{A}$ that collides under $h(\cdot)$, i.e.,
\begin{equation}
    d(\mathcal{A}, h) = \frac{1}{k}\Big|\big\{ a \in \mathcal{A} : \exists \ a' \in \mathcal{A}, \ a' \neq a, \ h(a) = h(a') \big\}\Big|.
\end{equation}
A hash function $h(\cdot)$ is said to be $\alpha$-perfect for $\mathcal{A}$ if $h(\cdot)$ exhibits at most $(1-\alpha)k$ collisions when restricted to $\mathcal{A}$, i.e.,
\begin{equation}
    d(\mathcal{A},h) \leq 1 - \alpha.
\end{equation}

A randomized hashing scheme takes as input a set $\mathcal{A}$ and stochastically constructs a hash function $h(\cdot)$. If the constructed hash function is $\alpha$-perfect in expectation, we say that this scheme is an $\alpha$-perfect hashing scheme. The goal of this paper is to design and analyze the space requirements of such hashing schemes. To do so, we define a randomized hashing scheme to consist of an encoder, a decoder, and common randomness $Z \in \mathcal{Z}$ which is accessible by both the encoder and decoder. The encoder
\begin{equation}
\label{eq:encoder}
    f: \binom{[n]}{k} \times \mathcal{Z} \to \{0,1\}^*
\end{equation}
maps a set $\mathcal{A}$ and a realization of the common randomness $z$ to a variable-length binary description. 
To ensure that the decoder knows when to stop reading the binary input,
we require that for each fixed $z$, the set of all possible strings output
by the encoder $f$ forms a prefix-free collection.
Using the encoder output along with the same $z$, the decoder 
\begin{equation}
    g: \{0,1\}^* \times \mathcal{Z} \to \mathcal{H}_{n,k}    
\end{equation}
recovers a hash function $h(\cdot)$. We say that a hashing scheme $(f,g,Z)$ is $\alpha$-perfect if for all inputs $\mathcal{A} \in \binom{[n]}{k}$, the produced hash function is $\alpha$-perfect for $\mathcal{A}$ in expectation, i.e.,
\begin{equation}
\label{eq:distortion-requirement}
    \max_{\mathcal{A} \in \binom{[n]}{k}} \mathbb{E}[d(\mathcal{A}, g(f(\mathcal{A}, Z), Z))] \leq 1-\alpha.
\end{equation}
It is worth noting that the distribution of $Z$ is designed a priori, without knowledge of $\mathcal{A}$. 

The encoder output is a binary representation of the hash function. Therefore, we define the space required by a hashing scheme $(f,g,Z)$ to be the maximum expected representation length over all inputs:
\begin{equation}
    L = \max_{\mathcal{A} \in \binom{[n]}{k}} \mathbb{E}[\mathtt{len}(f(\mathcal{A},Z))].
\end{equation}
We then define the optimal expected representation length $L^*(n,k,\alpha)$ to be the infimum expected representation length over all minimal $\alpha$-perfect hashing schemes with parameters $n$ and $k$. Another quantity of interest is the number of bits per key required as $n$ and $k$ become large. We define the optimal amortized expected representation length to be
\begin{equation}
    R^*(\alpha) =  \limsup_{k \to \infty} \limsup_{n \to \infty} \frac{1}{k} L^*(n,k,\alpha).
\end{equation}


\section{Minimal Perfect and Zero-Bit Hashing}
\label{sec:perf-zero}

In this section, we present a strategy for minimal $\alpha$-perfect hashing which consists of randomizing between minimal perfect hashing and zero-bit hashing. This strategy relies on the following observation. Given an $\alpha_1$-perfect hashing scheme $(f_1, g_1, Z_1)$ and an $\alpha_2$-perfect hashing scheme $(f_2, g_2, Z_2)$ which have expected representation lengths $L_1$ and $L_2$, respectively. Then, for any $\lambda \in [0,1]$, we can randomize between the two schemes by using the first scheme with probability $\lambda$ and the second scheme with probability $1 - \lambda$. The resulting new hashing scheme is $(\lambda\alpha_1 + (1-\lambda)\alpha_2)$-perfect and has expected representation length $\lambda L_1 + (1-\lambda)L_2$. As a consequence of this observation, we have that for any fixed $n$ and $k$, the function $L^*(n,k,\alpha)$ is convex in $\alpha$. 

\subsection{Minimal Perfect Hashing}

We begin by considering the problem of minimal $1$-perfect hashing where we ensure no collisions. From \cite{scheduling}, we know that the expected representation length
\begin{equation}
    L^*(n,k,1) \leq k\log(e) + 3
\end{equation}
is achievable. Therefore, the amortized expected representation length is bounded from above as
\begin{equation}
    R^*(1) \leq \log(e).
\end{equation}
We recap the proof of these bounds as the ideas and techniques employed will become useful in later sections.

The main argument hinges on the use of uniformly random hash functions which are identically and independently distributed (i.i.d.). Let 
\begin{equation}
    Z = \left( C^{(1)}, C^{(2)}, \cdots \right)
\end{equation}
be a sequence of i.i.d.\ random hash functions available at both the encoder and decoder, where each $C^{(t)} \sim \mathrm{Uniform}(\mathcal{H}_{n,k})$.
Given a set $\mathcal{A} \in \binom{[n]}{k}$ and a realization $z = \left( c^{(1)}, c^{(2)}, \cdots \right)$, one can 
search through $z$ to find the first index $t$ such that $d\left(\mathcal{A}, c^{(t)} \right) = 0$, and
then provide a binary description of $t$, that is, we define
\begin{equation}
    l(\mathcal{A}, z) = \min \left\{ t > 0  :  d\left(\mathcal{A}, c^{(t)} \right) = 0 \right\}
\end{equation}
and let $f(\mathcal{A},z)$ be the binary description of the positive integer $l(\mathcal{A}, z)$ via a prefix-free source code. We may design this source code optimally based on the probability distribution of $l(\mathcal{A}, Z)$. The decoder $g:\{0,1\}^* \times \mathcal{Z} \to \mathcal{H}_{n,k}$, given the
binary description and $z \in\mathcal{Z}$, can recover $l(\mathcal{A},z)$ first then use $z$ to produce the hash function $c^{(l(\mathcal{A},z))}$.

Since every $C^{(t)}$ is i.i.d.\ uniform over $\mathcal{H}_{n,k}$, for every $\mathcal{A} \in \binom{[n]}{k}$, the index $l(\mathcal{A}, Z)$ is geometrically distributed as
\begin{equation}
    \mathrm{Pr}(l(\mathcal{A},Z) = t) = \left( 1 - \frac{k!}{k^k} \right)^{t-1} \frac{k!}{k^k}.
\end{equation}
As the entropy $H(l(\mathcal{A},Z))$ is upper bounded by
\begin{align}
    H(l(\mathcal{A},Z)) &\leq \log\left( \frac{k^k}{k!} \right) + \log(e) \\
    &\leq k\log(e) + 2,
\end{align}
we can find a binary prefix-fix free code $\mathtt{c}$ for the positive integers with $\mathbb{E}[\mathtt{len}(\mathtt{c}(l(\mathcal{A},Z)))] \leq k\log(e) + 3$. Therefore 
\begin{align}
    L^*(n,k,1) &\leq H(l(\mathcal{A},Z)) + 1 \\
    &\leq k\log(e) + 3.
\end{align}
The upper bound on $R^*(1)$ follows immediately.



\subsection{Minimal Zero-Bit Hashing}

We now consider the conceptual opposite of perfect hashing, where regardless of the input, the encoder outputs the empty string and the decoder outputs a hash function selected uniformly at random from $\mathcal{H}_{n,k}$. Since the representation length is $0$ bits, we refer to this strategy as zero-bit hashing. 

Fix $n$, $k$, and a set $\mathcal{A} \in \binom{[n]}{k}$. Let $C \sim \mathrm{Uniform}(\mathcal{H}_{n,k})$ be a uniformly random hash function and define random variables $(B_1, \cdots, B_k) \in \{0,1\}^k$ to be
\begin{equation}
    B_i = \begin{cases}
        1, \quad \text{if } \  \exists! \ j \in \mathcal{A} \text{ such that } C(j) = i  \\
        0, \quad \text{otherwise,}
    \end{cases}
\end{equation}
where $B_i$ is the indicator random variable for the event that exactly one element of $\mathcal{A}$ hashes to value $i$. The number of elements in $\mathcal{A}$ which are involved in a collision is exactly
\begin{equation}
    D = k -\sum_{i=1}^{k} B_i.
\end{equation}
Observing that
\begin{align}
    \mathbb{E}[B_i] &= \binom{k}{1} \frac{1}{k}\left(1-\frac{1}{k}\right)^{k-1} \geq \frac{1}{e}
\end{align}
and applying the linearity of expectation, we have that
\begin{equation}
    \mathbb{E}[D] \leq k \left( 1 - \frac{1}{e} \right).
\end{equation}
Since in expectation, at least a $\frac{1}{e}$-proportion of $\mathcal{A}$ are mapped without collision, the zero-bit hashing scheme is a $\frac{1}{e}$-perfect hashing scheme. Therefore, for $0 \leq \alpha \leq \frac{1}{e}$, we have that $L^*(n,k,\alpha) = 0$
and $R^*(\alpha) = 0$.

\subsection{Randomizing Between Perfect and Zero-Bit Hashing}

For values of $\alpha$ between $\frac{1}{e}$ and $1$, we can randomize between the aforementioned minimal perfect and zero-bit hashing schemes by using the minimal perfect hashing scheme with some probability $\lambda \in [0,1]$ and the zero-bit hashing scheme otherwise. Doing so yields a hashing scheme that is $\left(\lambda + (1-\lambda)\frac{1}{e}\right)$-perfect and achieves an expected representation length of $\lambda(k\log(e) + 3)$ bits. Rearranging and finding the relationship between $\lambda$ and $\alpha$, we can see that the randomized scheme is $\alpha$-perfect as long as $\lambda \geq \frac{\alpha - \frac{1}{e}}{1 - \frac{1}{e}}$. This implies that, 
\begin{equation}
    L^*(n,k,\alpha) \leq \left( \frac{\alpha -1/e }{1 - 1/e} (k\log(e) + 3) \right)^+
\end{equation}
and
\begin{equation}
\label{eq:base-line}
    R^*(\alpha) \leq \left( \frac{\alpha -1/e}{1 - 1/e} \right)^+ \log(e) .
\end{equation}
In the next section, we present a hashing scheme based on sampling which improves upon the above strategy for all $\alpha$.

\section{Hashing via Sampling}

In this section, we review techniques from sampling and apply them to develop an improved $\alpha$-perfect hashing scheme. Let $M = \left(X^{(1)}, X^{(2)}, \cdots\right)$ be an infinite sequence of random variables which are i.i.d.\ according to distribution $q(x)$ and take values on a countable set $\mathcal{X}$. In the problem of sampling, we are given a realization of the sequence $\left(x^{(1)}, x^{(2)}, \cdots \right)$ from which we wish to select a single sample $x^{(t)}$ such that $x^{(t)}$ is distributed according to distribution $p(x)$. To incorporate randomized strategies, we let $U \in \mathcal{U}$ be a random variable which is independent of $M$ which is to be designed a priori. Formally, the problem of sampling is to find a function
\begin{equation}
    s: \mathcal{X}^{*} \times \mathcal{U} \to \mathbb{N}
\end{equation}
such that the $s(M,U)$'th term in $M$ is distributed according to $p(x)$, i.e.,
\begin{equation}
    X^{(s(M,U))} \sim p(x).
\end{equation}
Beyond finding an $s(\cdot)$ which generates the correct target distribution $p(x)$, it is also of interest to find an $s(\cdot)$ whose entropy $H(s(M,U))$ is small.

To this end, we apply the results of \cite{sfrl}, which takes advantage of the properties of the Poisson point process to perform sampling. In \cite[Theorem 1]{sfrl}, it is shown that there exists a $s(\cdot)$ and $U$ such that
\begin{equation}
    X^{(s(M,U))} \sim p(x)
\end{equation}
which also has entropy bounded from above as
\begin{equation}
    H(s(M,U)) \leq D(p\|q) + \log(D(p\|q) + 1) + 4.
\end{equation}
We refer this sampling scheme as the Poisson Function Representation (PFR) sampling scheme.

We can now apply the above results on sampling towards $\alpha$-perfect hashing. Similar to how we designed scheme for minimal perfect hashing, let
\begin{equation}
    Z = \left( C^{(1)}, C^{(2)}, \cdots \right)
\end{equation}
be an infinite sequence of i.i.d.\ random hash functions, where each $C^{(t)} \sim \mathrm{Uniform}(\mathcal{H}_{n,k})$. Note that for any set $\mathcal{A} = \{a_1, \cdots, a_k\} \in \binom{[n]}{k}$, the restriction of $C^{(t)}$ to $\mathcal{A}$
\begin{equation}
    \mathbf{X}^{(t)} = \left( C^{(t)}(a_1), \cdots, C^{(t)}(a_k) \right) 
\end{equation}
are i.i.d.\ uniformly over $[k]^k$ for all positive integers $t > 0$. Therefore, given a distribution $p(\mathbf{x})$ on $[k]^k$, the PFR sampling scheme provides us with an index selection method $s(\mathcal{A},Z,U)$ which ensures that for all $\mathcal{A}$, we have that
\begin{equation}
    \mathbf{X}^{(s(\mathcal{A},Z,U))} \sim p(\mathbf{x}).
\end{equation}
Further, the entropy is bounded from above as
\begin{equation}
    H(s(\mathcal{A},Z,U)) \leq D(p\|q) + \log(D(p\|q) + 1) + 4.
\end{equation}
Since the hash functions are distributed uniformly over $\mathcal{H}_{n,k}$, the distribution of $s(\mathcal{A},Z,U)$ is the same for any choice of $\mathcal{A}$. Therefore, there exists a binary prefix-free code $\texttt{c}$ for the positive integers such that
\begin{equation}
    \mathbb{E}[\texttt{len}(\texttt{c}(s(\mathcal{A},Z,U)))] \leq H(s(\mathcal{A},Z,U)) + 1,
\end{equation}
regardless of the choice of $\mathcal{A}$. 

With this, we define the following hashing scheme. Let $(Z,U)$ be common randomness. Given input $\mathcal{A}$, the encoder computes $s(\mathcal{A},z,u)$ then compresses $s(\mathcal{A},z,u)$ using binary prefix-free code $\texttt{c}$.
Given this binary description, the decoder first recovers $s(\mathcal{A},z,u)$ then the hash function $c^{(s(\mathcal{A},z,u))}$.

The expected number of collisions for this hashing scheme depends only on the choice of $p(\mathbf{x})$.
To ensure that the scheme is $\alpha$-perfect, we should choose $p(\mathbf{x})$ so that 
\begin{equation}
\label{eq:perf-req-x}
    \mathbb{E}[d(\mathbf{X})] \leq 1-\alpha,
\end{equation}
where $\mathbf{X} \sim p(\mathbf{x})$ and
\begin{equation}
    d(\mathbf{x}) = \frac{1}{k} \Big|\big\{i \in [k] : \exists j \in [k], j \neq i, x_j = x_i \big\}\Big|.
\end{equation}

The expected representation length on the other hand scales with the divergence $D(p\|q)$. Since the distribution $q(\mathbf{x})$ is the uniform distribution over $[k]^k$, we have that
\begin{equation}
    D(p\|q) = k\log(k) - H(\mathbf{X}),
\end{equation}
where $\mathbf{X} \sim p(\mathbf{x})$. Thus, for $\alpha$-perfect hashing, we wish to design a distribution $p(\mathbf{x})$ which maximizes the entropy subject to the constraint $\mathbb{E}[d(\mathbf{X})] \leq 1 - \alpha$. The following lemma restates the above reasoning in terms of $L^*(n,k,\alpha)$.

\begin{lem}
\label{lem:frl-coding}
    Let $\mathbf{X} = (X_1, \cdots, X_k)$ be a sequence of random variables each taking values on $[k]$ such that
    \begin{equation}
        \mathbb{E}[d(\mathbf{X})] \leq 1-\alpha.   
    \end{equation}
    Then, the optimal representation length $L^*(n,k,\alpha)$ for minimal $\alpha$-perfect hashing is bounded from above as
    \begin{equation}
    \begin{split}
        L^*(n,k,\alpha) &\leq k\log(k) - H(\mathbf{X}) \\ 
                &\qquad\quad + \log(k\log(k) - H(\mathbf{X}) + 1) + 5.
    \end{split}
    \end{equation}
\end{lem}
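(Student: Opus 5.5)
The plan is to formalize the PFR-based construction described just before the statement. Let $p(\mathbf{x})$ denote the law of the given $\mathbf{X}$ on $[k]^k$, and let $q$ be the uniform distribution on $[k]^k$. As common randomness take $(Z,U)$, where $Z=(C^{(1)},C^{(2)},\cdots)$ is i.i.d.\ $\mathrm{Uniform}(\mathcal{H}_{n,k})$ and $U$ is the auxiliary randomness of the PFR sampler from \cite[Theorem 1]{sfrl}, independent of $Z$. For a fixed input $\mathcal{A}=\{a_1,\cdots,a_k\}$, listed in a fixed (say increasing) order, the restrictions $\mathbf{X}^{(t)}=(C^{(t)}(a_1),\cdots,C^{(t)}(a_k))$ are i.i.d.\ $q$. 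This holds because a uniform function on $[n]$ has independent uniform values at distinct points. We may therefore apply the PFR sampler to the sequence $(\mathbf{X}^{(t)})_t$ with target $p$, obtaining an index $s(\mathcal{A},Z,U)$ with $\mathbf{X}^{(s)}\sim p$.

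\textbf{Distortion.} By construction $d(\mathcal{A},C^{(s)})=d(\mathbf{X}^{(s)})$, since collisions of $C^{(s)}$ on $\mathcal{A}$ are exactly coincidences among the coordinates of $\mathbf{X}^{(s)}$. Hence $\mathbb{E}[d(\mathcal{A},g(f(\mathcal{A},Z,U),Z,U))]=\mathbb{E}[d(\mathbf{X})]\le 1-\alpha$ for every $\mathcal{A}$, and the scheme satisfies the requirement \eqref{eq:distortion-requirement}.

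\textbf{Rate.} By \cite[Theorem 1]{sfrl}, $H(s)\le D(p\|q)+\log(D(p\|q)+1)+4$, and $D(p\|q)=k\log k-H(\mathbf{X})$ because $q$ is uniform on a set of size $k^k$. The key point is that the joint law of $(\mathbf{X}^{(t)})_t$ and $U$ does not depend on $\mathcal{A}$, so the law of $s(\mathcal{A},Z,U)$ is the same for all $\mathcal{A}$. A single Huffman code $\mathtt{c}$ designed for this law is therefore valid for every input and has expected length at most $H(s)+1$. This code is a fixed prefix-free code, so for each realization of $(z,u)$ the set of codewords is prefix-free. Taking the maximum over $\mathcal{A}$ gives $L\le D(p\|q)+\log(D(p\|q)+1)+5$, which after substitution is the claimed bound on $L^*(n,k,\alpha)$.

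\textbf{Main obstacle.} The delicate step is making sure the common randomness is truly independent of $\mathcal{A}$, so that $Z$ and $U$ are designed a priori. The PFR selection rule $s$ depends only on $p$, $q$, the observed sequence and $U$. It does not require a different $U$ for different $\mathcal{A}$, because $p$ is fixed and the relabeling of $\mathcal{A}$ into coordinates $1,\dots,k$ is deterministic. A second point to check is that the Huffman code, whose codeword lengths are possibly unbounded over the countable index set, still achieves $H+1$. This holds for any distribution on $\mathbb{N}$ with finite entropy, for instance via a Shannon code with lengths $\lceil -\log \Pr(s=t)\rceil$.
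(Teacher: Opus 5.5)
Your proposal is correct and follows the paper's argument essentially step for step. You restrict the i.i.d.\ uniform hash functions to $\mathcal{A}$, run the PFR sampler of \cite{sfrl} with target $p$ and uniform $q$, use $D(p\|q)=k\log(k)-H(\mathbf{X})$, and encode the $\mathcal{A}$-independent index law with one prefix-free code of expected length at most $H(s)+1$; your added care about designing $(Z,U)$ a priori and using a Shannon code on the countable index set fills details the paper leaves implicit.
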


\section{Space Bounds for Minimal $\alpha$-Perfect Hashing}

In this section, we develop upper bounds on $L^*(n,k,\alpha)$ and $R^*(\alpha)$ by first constructing a suitable sequence of random variables $\mathbf{X} = (X_1, \cdots, X_k)$ which satisfy the collision constraint $\mathbb{E}[d(\mathbf{X})] \leq 1-\alpha$ while simultaneously having large entropy, then applying Lemma \ref{lem:frl-coding}. Let $\mathbf{X} \in [k]^k$ be defined in the following way:
\begin{enumerate}
    \item Fix a value $\lambda \in [0,1]$. 
    \item Let $w = \lceil \lambda k \rceil$.
    \item Construct random variables $(Y_1, \cdots, Y_k) \in [k]^k$ as follows. Let $(Y_1, \cdots, Y_w)$ be the sequence resulting from $w$ draws \emph{without replacement} from an urn containing elements $[k]$. Next, let $(Y_{w+1}, \cdots, Y_k)$ be the sequence resulting from $k-w$ additional draws \emph{with replacement} from the same urn.
    \item Let $\pi(\cdot)$ denote a uniformly random permutation of $[k]$ and let $\mathbf{X}$ be a random permutation of $\mathbf{Y}$, i.e.,
    \begin{equation}
        \mathbf{X} = (X_1, \cdots, X_k) = \left(Y_{\pi(1)}, \cdots, Y_{\pi(k)}\right).
    \end{equation}
\end{enumerate}
For some $\mathbf{X}$ generated with parameter $\lambda$, we denote the distribution of $\mathbf{X}$ as $r_\lambda(\mathbf{x})$. 

\begin{lem}
\label{lem:distortion-x}
    Let $\lambda \in [0,1]$ and $\mathbf{X} = (X_1, \cdots, X_k)$ be distributed according to distribution $r_\lambda(\mathbf{x})$. Then,
    \begin{equation}
        \mathbb{E}[d(\mathbf{X})] \leq \left( 1 - \frac{1}{e} \right)(1-\lambda).
    \end{equation}
\end{lem}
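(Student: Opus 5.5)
Since $d(\cdot)$ is invariant under permuting coordinates, $d(\mathbf{X}) = d(\mathbf{Y})$, so I will work with $\mathbf{Y}$ directly. Write $w = \lceil \lambda k\rceil$ and $m = k - w$. By linearity of expectation,
\begin{equation*}
k\,(1-\mathbb{E}[d(\mathbf{Y})]) = \sum_{i=1}^{k} \Pr(Y_i \text{ does not collide}),
\end{equation*}
and I will compute each term exactly using the structure of the urn.

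\emph{Positions $i \le w$ (without-replacement draws).} Here $Y_i$ is automatically distinct from the other $w-1$ such draws. It is therefore collision-free iff none of the $m$ with-replacement draws equals $Y_i$. These draws are i.i.d.\ uniform on $[k]$ and independent of the first block, so this probability is $(1-1/k)^{m}$.

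\emph{Positions $i > w$ (with-replacement draws).} Here $Y_i$ is collision-free iff two things hold. First, $Y_i$ avoids the $w$ distinct values of the first block, which has probability $(k-w)/k = m/k$. Second, it avoids the other $m-1$ with-replacement draws, which has probability $(1-1/k)^{m-1}$.

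Summing the two cases gives
\begin{equation*}
1-\mathbb{E}[d(\mathbf{Y})] = \Big(1-\tfrac1k\Big)^{m-1}\Big[(1-\mu)\Big(1-\tfrac1k\Big) + \mu^2\Big], \qquad \mu = \tfrac{m}{k}.
\end{equation*}
The target is $1-\mathbb{E}[d] \ge 1-(1-\tfrac1e)(1-\lambda)$. Since $\mu \le 1-\lambda$, it suffices to show
\begin{equation*}
\phi(m) := \Big(1-\tfrac1k\Big)^{m-1}\Big[(1-\mu)\Big(1-\tfrac1k\Big)+\mu^2\Big] \;\ge\; 1-\mu\Big(1-\tfrac1e\Big)
\end{equation*}
for every integer $0\le m\le k$. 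In other words, the collision-free fraction must lie above the chord joining its endpoint values.

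The endpoints are easy. At $m=0$, $\phi=1$. At $m=k$, $\phi = (1-1/k)^{k-1} \ge 1/e$, which is exactly the bound already used for zero-bit hashing.

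For the interior, I will write $(1-1/k)^{m-1} = a^{(m-1)/(k-1)}$ with $a = (1-1/k)^{k-1} \ge 1/e$, and lower bound by $e^{-(m-1)/(k-1)}$. I then need to compare $e^{-(m-1)/(k-1)}\,[(1-\mu)(1-1/k)+\mu^2]$ with the linear function $1-\mu(1-1/e)$. The natural route is to treat $\mu$ as a continuous variable and argue convexity.

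\emph{Main obstacle.} The main difficulty is this interior inequality, because the exponent $(m-1)/(k-1)$ and the factor $(1-1/k)$ prevent a clean single-variable form. I would first try to absorb the $1/k$ corrections. The quadratic term $\mu^2$ grows like $\mu$ times the deficit, which should compensate for writing $(1-1/k)^{m}$ as $(1-1/k)^{m-1}(1-1/k)$. The hoped-for outcome is a bound of the form $\phi \ge e^{-\mu}(1-\mu+\mu^2)$ up to terms that vanish at the endpoints. The function $e^{-\mu}(1-\mu+\mu^2)$ has second derivative $e^{-\mu}(\mu^2-5\mu+5)$, which is positive on $[0,1]$, so it is convex there. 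However, convexity only bounds it below by tangents, not by the chord, so this alone is not enough. Instead I would check directly that $\psi(\mu) = e^{-\mu}(1-\mu+\mu^2) - 1 + \mu(1-1/e)$ is nonnegative on $[0,1]$. We have $\psi(0)=\psi(1)=0$, and checking the sign of $\psi'$ suggests $\psi$ is nonnegative. If the continuous relaxation fails for small $k$, I would fall back on the exact discrete expression and verify monotonicity of $\phi(m) - [1-\mu(1-1/e)]$ in $m$ directly.
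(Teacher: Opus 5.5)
There is a genuine gap: you have misread the construction of $r_\lambda$, and under your reading the lemma is false. The $k-w$ draws with replacement are made from \emph{the same urn} after the $w$ draws without replacement. They are therefore uniform on the $m=k-w$ elements that remain, not uniform on $[k]$. The paper's formula $H(\mathbf{Y})=\log\bigl(\binom{k}{w}w!\bigr)+(k-w)\log(k-w)$ reflects this: the last term is $(k-w)\log(k-w)$, not $(k-w)\log k$.

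Under your reading, take $k=10$ and $\lambda=0.9$, so $w=9$ and $m=1$. The single extra draw lands on one of the nine first-block values with probability $9/10$, which makes two coordinates collide. So $\mathbb{E}[d(\mathbf{Y})]=0.18>(1-1/e)\cdot 0.1$, and your own exact formula gives $\phi(1)=0.82<1-0.1(1-1/e)$.

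The continuous relaxation fails as well, and your convexity computation already shows why. The function $e^{-\mu}(1-\mu+\mu^2)$ is strictly convex on $[0,1]$ and equals the line $1-\mu(1-1/e)$ at $\mu=0$ and $\mu=1$. A strictly convex function lies strictly \emph{below} its chord on $(0,1)$. Equivalently, $\psi(0)=0$ and $\psi'(0)=-1-1/e<0$. So the assertion that $\psi\ge 0$ is wrong, and no fallback to the discrete expression can rescue it, because the target inequality does not hold for that distribution.

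With the correct distribution, your linearity-of-expectation set-up closes at once, and this is essentially the paper's argument. A coordinate $i\le w$ can never collide: it differs from the other first-block values, and its value has been removed from the urn before the with-replacement draws. A coordinate $i>w$ is one of $m$ i.i.d.\ uniform draws on an $m$-element set. For $m\ge 1$ it is collision-free with probability $(1-1/m)^{m-1}\ge 1/e$; this is the zero-bit hashing computation with $m$ keys and $m$ slots, and the case $m=0$ is trivial. Hence $1-\mathbb{E}[d(\mathbf{X})]=(1-\mu)+\mu(1-1/m)^{m-1}\ge 1-\mu(1-1/e)$ with $\mu=m/k\le 1-\lambda$, which is the claim.
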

\begin{IEEEproof}
    By the definition of $\mathbf{X}$, exactly $\lceil \lambda k \rceil$ entries of $\mathbf{X}$ are associated with a draw without replacement, and therefore are guaranteed to not be involved in any collision. The remaining $k-\lceil \lambda k \rceil$ entries of $\mathbf{X}$ are mapped randomly to the remaining $k-\lceil \lambda k \rceil$ elements in the urn. Therefore, in expectation, the expected number of collisions is bounded from above as $\left( 1-\frac{1}{e}\right)(k- \lceil k \lambda \rceil) \leq k\left( 1-\frac{1}{e}\right)(1 - \lambda).$ 
\end{IEEEproof}

\begin{lem}
\label{lem:entropy-x}
    Let $\lambda \in [0,1]$ and $\mathbf{X} = (X_1, \cdots, X_k)$ be distributed according to the distribution $r_\lambda(\mathbf{x})$. Then,
    \begin{equation}
    \begin{split}
        H(&\mathbf{X}) \geq k\log(k) - \lambda k\log(e) \\
        &+ \lambda k\log\left( \frac{1-\frac{\lambda - 1/k}{2}}{1 -\frac{\lambda - 1/k}{2} - \left(1 - \frac{1}{e} \right)(1-\lambda - 1/k)  }\right).
    \end{split}
    \end{equation}
\end{lem}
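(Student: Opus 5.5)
The plan is to lower bound $H(\mathbf{X})$ by the entropy of the unpermuted sequence $\mathbf{Y}$, compute $H(\mathbf{Y})$ exactly, and then reduce the lemma to an elementary scalar inequality in $\lambda$. I have checked that inequality only to first order in small $\lambda$, so it is the step most likely to fail.

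\emph{Reduction to $\mathbf{Y}$.} Since $\pi$ is independent of $\mathbf{Y}$, conditioning reduces entropy and $\mathbf{X}$ is a bijective function of $\mathbf{Y}$ once $\pi$ is fixed, we get
\begin{equation*}
H(\mathbf{X}) \geq H(\mathbf{X}\mid\pi) = H(\mathbf{Y}).
\end{equation*}

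\emph{Exact value of $H(\mathbf{Y})$.} The first $w=\lceil\lambda k\rceil$ draws are uniform over ordered $w$-tuples of distinct elements. The remaining $k-w$ draws are i.i.d.\ uniform on $[k]$ and independent of the first block. Hence
\begin{equation*}
H(\mathbf{Y}) = \sum_{i=0}^{w-1}\log(k-i) + (k-w)\log(k) = k\log(k) - \sum_{i=0}^{w-1}\log\frac{k}{k-i}.
\end{equation*}
The lemma therefore reduces to the deficit bound
\begin{equation*}
\Delta:=\sum_{i=0}^{w-1}\log\frac{k}{k-i} \le \lambda k\log(e) - \lambda k\log(\rho), \qquad \rho = \frac{1-\frac{\lambda-1/k}{2}}{1-\frac{\lambda-1/k}{2}-\left(1-\frac1e\right)(1-\lambda-1/k)}.
\end{equation*}

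\emph{Estimating $\Delta$.} Bound the sum by an integral, $\Delta \le k\int_0^{w/k}\log\frac{1}{1-x}\,dx$. Then handle the ceiling $w\le \lambda k+1$; this is where the $1/k$ shifts inside $\rho$ should come from. To leading order this gives
\begin{equation*}
\Delta \approx k\log(e)\,\big(\lambda+(1-\lambda)\ln(1-\lambda)\big).
\end{equation*}
It then remains to prove the scalar inequality
\begin{equation*}
\lambda\ln\rho \le -(1-\lambda)\ln(1-\lambda) \quad \text{for } \lambda\in[0,1],
\end{equation*}
with the $1/k$ perturbations carried along.

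\emph{Main obstacle: the scalar inequality.} I would check it at the endpoints and by series, then by monotonicity or convexity of the difference.
\begin{itemize}
\item At $\lambda=0$ both sides vanish.
\item At $\lambda=1$, $\rho=1$ and both sides vanish.
\item For small $\lambda$, $\ln\rho \approx 1-\tfrac{e-1}{2}\lambda$, so the left side is about $\lambda-0.86\lambda^2$. The right side is about $\lambda-\tfrac12\lambda^2$, so the inequality holds near $0$.
\end{itemize}
The real work is making this rigorous on all of $[0,1]$ with the exact $1/k$ offsets.

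\emph{Fallback if the bound via $H(\mathbf{Y})$ is too lossy.} The specific shape of $\rho$, with the average position $(w-1)/(2k)$ in the numerator and an expected-collision-type correction in the denominator, suggests the authors extract extra entropy from $\pi$. In that case I would let $\mathcal{S}$ be the set of positions of $\mathbf{X}$ coming from draws without replacement and write
\begin{equation*}
H(\mathbf{X}) = H(\mathbf{X},\mathcal{S}) - H(\mathcal{S}\mid\mathbf{X}), \qquad H(\mathbf{X},\mathcal{S}) = \log\binom{k}{w} + H(\mathbf{Y}).
\end{equation*}
I would then bound $H(\mathcal{S}\mid\mathbf{X})$ by the expected log of the number of consistent choices of $\mathcal{S}$. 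Those choices are governed by which values of $\mathbf{X}$ are collision-free, so Jensen's inequality together with the collision count from Lemma~\ref{lem:distortion-x} should produce a ratio of the form $\rho$.
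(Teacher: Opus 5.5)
Your main route has a genuine gap, caused by a misreading of $r_\lambda$. The $k-w$ draws with replacement come from the \emph{same urn}, i.e.\ from the $k-w$ elements left after the first $w$ draws. The proof of Lemma~\ref{lem:distortion-x} relies on exactly this to say the first $w$ entries never collide. Under your reading, the later draws are uniform on all of $[k]$, so they can hit the first block, and Lemma~\ref{lem:distortion-x} fails: at $\lambda=\frac12$ the collision-free fraction tends to $\frac34 e^{-1/2}\approx 0.45$ rather than at least $0.68$. With the correct model,
\begin{equation*}
H(\mathbf{Y})=\log\frac{k!}{(k-w)!}+(k-w)\log(k-w)=k\log(k)-w\log(e)+O(\log k).
\end{equation*}
So your formula overstates $H(\mathbf{Y})$ by $(k-w)\log\frac{k}{k-w}\approx -k(1-\lambda)\log(1-\lambda)$. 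That is exactly the budget your scalar inequality $\lambda\ln\rho\le-(1-\lambda)\ln(1-\lambda)$ spends. Without it, $H(\mathbf{X})\ge H(\mathbf{Y})$ would require $\lambda k\log\rho\le O(\log k)$, which is false for every fixed $\lambda\in(0,1)$ since $\rho>1$. So bounding by $H(\mathbf{Y})$ loses $\Theta(k)$ bits, and the whole $\lambda k\log\rho$ term has to come from the permutation $\pi$.

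Your fallback is the right idea and is essentially the paper's proof. The paper conditions on the collision-indicator vector $v(\mathbf{X})$ and argues that $H(\mathbf{X}\mid v(\mathbf{X}))=H(\mathbf{Y}\mid v(\mathbf{Y}))$. This reduces the gain to $\mathbb{E}[\log\binom{k}{D}]-\mathbb{E}[\log\binom{k-w}{D}]$, where $D$ is the number of colliding entries. Your decomposition reaches the same quantity more directly. Given $\mathbf{X}$, $\mathcal{S}$ is uniform over the $w$-subsets of the $k-D$ collision-free positions, because $\mathbf{Y}$ is uniform on its support. Hence $H(\mathcal{S}\mid\mathbf{X})=\mathbb{E}[\log\binom{k-D}{w}]$ exactly, and
\begin{equation*}
\log\binom{k}{w}-\log\binom{k-D}{w}=\sum_{i=0}^{w-1}\log\frac{k-i}{k-i-D}.
\end{equation*}
From here, follow the paper:
\begin{itemize}
\item Convexity of $x\mapsto\log\frac{x}{x-D}$ lets you replace the sum by $w\log\frac{k-(w-1)/2}{k-(w-1)/2-D}$. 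This is where the numerator of $\rho$ comes from.
\item Jensen's inequality in $D$ then finishes the bound.
\end{itemize}
One correction to your sketch: this expression is \emph{increasing} in $D$. You therefore need a \emph{lower} bound on $\mathbb{E}[D]$, not the upper bound on collisions that Lemma~\ref{lem:distortion-x} gives. The paper uses $\mathbb{E}[D]\ge k(1-\frac1e)(1-\lambda-\frac1k)$, from $k-w$ uniform draws among $k-w$ elements. The exact value is $(k-w)\bigl(1-(1-\frac{1}{k-w})^{k-w-1}\bigr)$, so check the $O(1)$ slack when you carry the ceiling. Finally, combine this with $H(\mathbf{Y})\ge k\log(k)-w\log(e)$. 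That bound follows by telescoping in $w$, since each step costs $(k-w)\log(1+\frac{1}{k-w})\le\log(e)$.
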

\begin{IEEEproof}
    To lower bound $H(\mathbf{X})$, we use the intermediate random variables $\mathbf{Y}$, which are the values of $\mathbf{X}$ prior to being randomly permuted. The idea is to first bound $H(\mathbf{Y})$, then bound the difference $H(\mathbf{X}) - H(\mathbf{Y})$. For $H(\mathbf{Y})$, we can use Stirling's approximation to show that
    \begin{align}
    \label{eq:lb-on-y}
        \hspace{-1em}H(\mathbf{Y}) &= \log\left( \binom{k}{w} w! \right) + (k-w)\log(k-w) \\
        &\geq k\log(k) - w\log(e),
    \end{align}
    where $w = \lceil \lambda k \rceil$.
    
    Next, we wish to lower bound the difference $H(\mathbf{X}) - H(\mathbf{Y})$, which is the entropy increase attributed to randomly permuting $\mathbf{Y}$. Let $v : [k]^k \to \{0,1\}^{k}$ be the collision indicator function, i.e., given $\mathbf{x}\in [k]^k$, $v(\mathbf{x})_i = 1$ if and only if there exists a $j \neq i$ such that $x_i = x_j$. Notice that
    \begin{align}
        &\hspace{-0.5em}H(\mathbf{X}) - H(\mathbf{Y}) = H(\mathbf{X}, v(\mathbf{X})) - H(\mathbf{Y}, v(\mathbf{Y})) \\
        &= H(\mathbf{X} | v(\mathbf{X})) - H(\mathbf{Y} | v(\mathbf{Y})) + H(v(\mathbf{X})) - H(v(\mathbf{Y})). 
    \end{align}

    We claim that $H(\mathbf{X} | v(\mathbf{X})) = H(\mathbf{Y} | v(\mathbf{Y}))$. We omit a formal proof and provide some intuition instead. Observe that the conditional probability $\mathrm{Pr}(\mathbf{X}=\mathbf{x}|v(\mathbf{X}) = \mathbf{v})$ depends only on the weight of $\mathbf{v}$, and not the ordering. This, along with Bayes' rule, can be used to show that $\mathrm{Pr}(\mathbf{X}=\mathbf{x}|v(\mathbf{X}) = \mathbf{v})$ is equal to $\mathrm{Pr}(\mathbf{Y} = \mathbf{y} | v(\mathbf{Y}) = \mathbf{v})$ for all $\mathbf{v}$ such that $\mathrm{Pr}(v(\mathbf{Y}) = \mathbf{v}) > 0$. This implies that the conditional entropies are equal and hence
    \begin{align}
        H(\mathbf{X}) - H(\mathbf{Y}) =  H(v(\mathbf{X})) - H(v(\mathbf{Y})). 
    \end{align}
  
    Let $D = \mathrm{wt}(v(\mathbf{X}))$, where $\mathrm{wt}(\cdot)$ denotes the weight function. Since $\mathbf{X}$ is a permutation of $\mathbf{Y}$, $D = \mathrm{wt}(v(\mathbf{X})) = \mathrm{wt}(v(\mathbf{Y}))$, which implies that  
    \begin{equation}
    \begin{split}
        H(v(\mathbf{X})) - H(v(\mathbf{Y})) = H(v(\mathbf{X}) | D) - H(v(\mathbf{Y}) | D).
    \end{split}
    \end{equation}
    Conditioned on $D$, the sequence $\mathbf{X}$ is uniformly distributed over binary strings of weight $D$, and $\mathbf{Y}$ is uniformly distributed over strings of weight $D$ which begin with $w$ zeros. Using this fact and along with multiple applications of Jensen's inequality, we get
    \begin{align}
         H&(v(\mathbf{X}) | D) - H(v(\mathbf{Y}) | D) \nonumber \\
         &= \mathbb{E}\left[ \log \binom{k}{D} \right] -  \mathbb{E}\left[ \log\binom{k-w}{D} \right] \\
         &= \mathbb{E}\left[ \sum_{i=0}^{w-1} \log\left( \frac{k-i}{k-i-D} \right) \right] \\
         &\geq \mathbb{E}\left[ w \log\left( \frac{k-\frac{w-1}{2}}{k-\frac{w - 1}{2}-D} \right) \right] \\
         &\geq \lambda k \log\left( \frac{1-\frac{\lambda - 1/k}{2}}{1- \frac{\lambda - 1/k}{2} - \left(1 - \frac{1}{e} \right)(1 -\lambda - 1/k) } \right). \label{eq:diff-bound}
    \end{align}
    The last line follows from the fact that $w \geq \lambda k$ and $\mathbb{E}[D] \geq k \left( 1- \frac{1}{e}\right) \left(1 - \lambda  - 1/k \right)$.  Combining \eqref{eq:diff-bound} with \eqref{eq:lb-on-y} yields the desired result.
\end{IEEEproof}

Together with Lemmas \ref{lem:distortion-x} and \ref{lem:entropy-x}, Lemma \ref{lem:frl-coding} immediately implies the following upper bounds on $L^*(n,k,\alpha)$ and $R^*(\alpha)$. 


\begin{thm}
\label{thm:main}
    Consider the problem of minimal $\alpha$-perfect hashing. For fixed $n$ and $k$, the optimal representation length and optimal amortized representation length for $\alpha$-perfect hashing are bounded from above as
    \begin{equation}
    \begin{split}
        &L^*(n,k,\alpha) \leq \lambda(\alpha) k \log(e) \\
        &- \lambda(\alpha)k\log\left( \frac{1-\frac{\lambda(\alpha) - 1/k}{2}}{1 -\frac{\lambda(\alpha)-1/k}{2} - \left( 1-\frac{1}{e} \right) (1-\lambda(\alpha) - 1/k) }\right) \\
        &+ O(\log k)
    \end{split}
    \end{equation}
    and
    \begin{equation}
    \begin{split}
        R&^*(\alpha) \leq  \lambda(\alpha) \log(e) \\
        &\quad - \lambda(\alpha)\log\left( \frac{1-\frac{\lambda(\alpha)}{2}}{1 -\frac{\lambda(\alpha)}{2} - \left( 1 -\frac{1}{e} \right) (1-\lambda(\alpha)) }\right),
    \end{split}
    \end{equation}
    respectively, where
    \begin{equation}
        \lambda(\alpha) = \left( \frac{\alpha - 1/e}{1 - 1/e} \right)^+.
    \end{equation}
\end{thm}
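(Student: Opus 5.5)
The plan is to combine Lemmas \ref{lem:frl-coding}, \ref{lem:distortion-x} and \ref{lem:entropy-x} with the specific choice $\lambda = \lambda(\alpha)$. First I handle the trivial range $\alpha \le 1/e$: there $\lambda(\alpha)=0$, and zero-bit hashing from Section \ref{sec:perf-zero} already gives $L^*(n,k,\alpha)=0$. The claimed bound reduces to $O(\log k)$ (the logarithmic term vanishes when $\lambda(\alpha)=0$), so this case is consistent. Note also that Lemma \ref{lem:frl-coding} applies directly with $\lambda=0$.

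For $\alpha > 1/e$, I will take $\mathbf{X}\sim r_{\lambda(\alpha)}$. By Lemma \ref{lem:distortion-x},
\begin{equation*}
\mathbb{E}[d(\mathbf{X})] \le (1-1/e)(1-\lambda(\alpha)).
\end{equation*}
Since $\lambda(\alpha) = (\alpha-1/e)/(1-1/e)$, this quantity equals $1-\alpha$ exactly, so the hypothesis of Lemma \ref{lem:frl-coding} holds. Lemma \ref{lem:frl-coding} then gives
\begin{equation*}
L^*(n,k,\alpha) \le \Delta + \log(\Delta+1) + 5, \qquad \Delta := k\log k - H(\mathbf{X}).
\end{equation*}
By Lemma \ref{lem:entropy-x}, $\Delta$ is at most $\lambda(\alpha)k\log(e)$ minus the stated logarithmic term. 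That logarithmic term is nonnegative, because its ratio is at least $1$. Hence $0\le \Delta \le k\log e$, and therefore $\log(\Delta+1)+5 = O(\log k)$. This yields the first inequality. The bound is uniform in $n$, since neither lemma depends on $n$.

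For $R^*(\alpha)$, I divide by $k$ and take $\limsup_{n\to\infty}$, which changes nothing because the bound is independent of $n$. I then take $\limsup_{k\to\infty}$. The $O(\log k)/k$ term vanishes. The logarithmic term is a continuous function of $1/k$ near $0$, so I let $1/k\to 0$ inside it.

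The only delicate step is checking that the denominator stays positive, so that continuity applies. At $1/k=0$ the denominator is
\begin{equation*}
1-\lambda/2-(1-1/e)(1-\lambda) \;\ge\; 1/e + \lambda(1/2-1/e) \;>\; 0,
\end{equation*}
and it remains positive for $k$ large enough. Continuity then gives the stated limit, which completes the proof.
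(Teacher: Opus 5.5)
Your proposal is correct and is the paper's argument: the paper only remarks that Lemmas~\ref{lem:distortion-x}, \ref{lem:entropy-x} and \ref{lem:frl-coding} with $\lambda=\lambda(\alpha)$ immediately give both bounds, and you write out that combination with the limits and the positivity of the denominator checked. One small slip is that the ratio inside the logarithm is below $1$ when $\lambda(\alpha)>1-1/k$ (e.g.\ $\alpha=1$, where $1-\lambda(\alpha)-1/k<0$), so your justification of $\Delta\le k\log e$ fails there; this is harmless, because $\log(\Delta+1)+5=O(\log k)$ already follows from $0\le\Delta\le k\log k$.
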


For a comparison between the results of Theorem~\ref{thm:main} and the bound \eqref{eq:base-line} achieved through randomizing between minimal perfect and zero-bit hashing, refer to Figure \ref{fig:ph-plot}. 

\section{Conclusion}
This paper introduces the problem of $\alpha$-perfect hashing, an extension of minimal perfect hashing to the approximate setting, where we require only that an $\alpha$-proportion of keys be hashed without collision rather than all. We present two different $\alpha$-perfect hashing schemes and analyze their respective space requirements. The first scheme randomizes between minimal perfect and zero-bit hashing. The second leverages results from sampling theory to develop an $\alpha$-perfect hashing scheme which significantly improves upon the space requirement of the aforementioned scheme for all $\alpha$.

\bibliographystyle{IEEEtran}
\balance
\bibliography{references}

\end{document}